\title{Necessary and Sufficient Condition for Satisfiability of a Boolean Formula in CNF and its Implications on P versus NP problem}
\author{Manoj Kumar\thanks{Assistant Professor in Computer Application, Government College of Teacher Education, Dharamshala, Dist. Kangra, Himachal Pradesh, India.
		(\email{manoj6689@gmail.com})}}
\begin{document}
\maketitle

\begin{abstract}
	Boolean satisfiability problem has applications in various fields. An efficient algorithm to solve satisfiability problem can be used to solve many other problems efficiently. The input of satisfiability problem is a finite set of clauses. In this paper, properties of clauses have been studied. A type of clauses have been defined, called fully populated clauses, which contains each variable exactly once. A relationship between two unequal fully populated clauses has been defined, called sibling clauses. It has been found that, if one fully populated clause is false, for a truth assignment, then all it's sibling clauses will be true for the same truth assignment. Which leads to the necessary and sufficient condition for satisfiability of a boolean formula, in CNF. The necessary and sufficient condition has been used to develop a novel algorithm to solve boolean satisfiability problem in polynomial time, which implies, P equals NP. Further, some optimisations have been provided that can be integrated with the algorithm for better performance. 
	
\end{abstract}

\begin{keywords}
	Boolean Satisfiability Problem, Polynomial Time, P vs NP, Non Polynomial Time
\end{keywords}

\begin{AMS}
	68Q01, 68Q25, 03E75
\end{AMS}

\section{Introduction}
Boolean Satisfiability problem is a NP-complete problem.\cite{Karp1972} It implies, all other NP-complete problems can be reduced to Boolean Satisfiability problem. So, if there exist an algorithm that can solve Boolean Satisfiability problem in polynomial time, then every other NP-complete problem can be solved in polynomial time. It has lead to formulation of P versus NP problem defined by Stephen Cook in \cite{Cook2006}. History and importance of P versus NP problem has been discussed in detail in \cite{Cook2006}.

In August, 2020, in \cite{Bearden2020}, an efficient dynamical-system approach to solve Boolean satisfiability problems have been presented. But also, in \cite{Bearden2020}, it has been stated, "Although these analytical and numerical results do not settle the famous P vs. NP question, they show that appropriately designed physical systems are very useful tools for new avenues of research in constraint satisfaction problems.", which implies, non-existence of an polynomial time algorithm to solve boolean satisfiability problem, till August, 2020. 

In this paper, properties of clauses have been studied, novel relationships have been defined among clauses, and a necessary and sufficient condition has been established that determine satisfiability of any boolean formula in CNF, which has been used to develop a polynomial time algorithm to solve boolean satisfiability problem. Which implies, that, boolean satisfiability problem belongs to $P$ class, which implies, $P=NP$. \cite{Karp1972}

\section{Boolean Satisfiability Problem}
As defined in \cite{Karp1972}, 
For the given clauses $C_1, C_2, C_3 \dots C_p$, we need to find whether conjuction of the given clauses is satisfiable or not. 

\section{Terminology used}
The terms literal, boolean variable, clause are used with same meaning as defined in \cite{Heule2015}.
A boolean formula in CNF is a conjuction of clauses. It can be represented as a finite set of clauses. \cite{Heule2015}
A set of boolean variables is called a variable set.

\section{Notations used}
The notations, representing basic relations between sets have been used as defined in \cite{jech2013set}. 

\subsection{Variable cases}
Let a variable, X, can be assigned values $x_1$, $x_2$ and $x_3$, independently,
then, it is written as:
$$
X=\begin{cases}
x_1 \\
x_2 \\
x_3 \\
\end{cases}
$$
\section{Tautology Clause}
A clause, which evaluates to $true$ for every valuation, is called a tautology clause. If a clause contains a complemented pair of literals, it is a tautology.\cite{Heule2015} In other words, If $\exists x \in T | \neg x \in T$, then $T$ is a taulogy clause.

\subsection{Significance of tautology clause in satisfiability problem}
\label{sec:st}
As a tautology clause always evaluates to $true$, that is represented by 1 in boolean algebra. Let F is a boolean formula in CNF, which containins a tautology clause, we can write 
\begin{equation}
F=C_1 \wedge T 
\end{equation}

where $T$ is a tautology clause.
$$
\implies F=C_1 \wedge 1 
$$
by using Identity property of Boolean algebra,
\begin{equation}
\implies F=C_1 
\end{equation}

Hence, The tautology clause has no effect on satisfiablity of a boolean formula, So, It can be ignored while solving satisfiability problem.

\subsection{Non-Tautology Clause}
A clause which is not a tautology is called a non-tautology clause.

\begin{lemma}
If $N$ is a non-tautology clause, then 
$\forall x \in N \implies \neg x \notin N$.	
\end{lemma}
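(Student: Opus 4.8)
The plan is to argue by contraposition directly from the definition of a tautology clause given earlier in the excerpt. Recall that a clause $T$ is a tautology precisely when it contains a complemented pair of literals, i.e.\ when $\exists x \in T$ with $\neg x \in T$. The statement to be proved, $\forall x \in N \implies \neg x \notin N$, is logically the negation of ``$\exists x \in N$ with $\neg x \in N$'', so the lemma is essentially a restatement of ``$N$ is not a tautology'' once we unfold the definition.

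Concretely, I would proceed as follows. First, suppose toward a contradiction that the conclusion fails; that is, assume there exists a literal $x \in N$ such that $\neg x \in N$. Second, observe that this is exactly the condition ``$\exists x \in N \mid \neg x \in N$'', which by the definition in the section on tautology clauses means that $N$ is a tautology clause. Third, note that this contradicts the hypothesis that $N$ is a non-tautology clause, since by definition a non-tautology clause is one that is not a tautology. Hence no such $x$ can exist, and therefore $\forall x \in N \implies \neg x \notin N$, as claimed.

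I do not anticipate any real obstacle here: the lemma is an immediate consequence of unwinding the definitions of ``tautology clause'' and ``non-tautology clause'' and taking the contrapositive, with no combinatorial or algebraic work required. The only thing to be careful about is stating the quantifier manipulation cleanly — making explicit that ``not ($\exists x \in N$ with $\neg x \in N$)'' is the same as ``$\forall x \in N$, $\neg x \notin N$'' — so that the reader sees the equivalence rather than a gap. Optionally, I would remark that the converse also holds, so that the condition $\forall x \in N \implies \neg x \notin N$ could equally well serve as the definition of a non-tautology clause, which is convenient for later use.
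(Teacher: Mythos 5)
Your proposal is correct and follows essentially the same route as the paper: assume some $x \in N$ with $\neg x \in N$, invoke the definition of a tautology clause to contradict the hypothesis that $N$ is non-tautology, and conclude by the quantifier equivalence. No meaningful difference from the paper's argument.
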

\begin{proof}
	Given that, $N$ is not a tautology clause.
	Let, for the sake of contradiction, 
	$$\exists x \in N | \neg x \in N$$
	$\implies$ N is a tautology clause(from definition), which is not true. So, our assumption is wrong. Hence, 
	$$\forall x \in N \implies \neg x \notin N$$
	
\end{proof}

\begin{lemma}
	\label{lemma:x1c1}
	If $C$ is a clause, with n literals, such that, $\exists x_i \in C \; | \; x_i=1(true)$ then $C=1(true)$
\end{lemma}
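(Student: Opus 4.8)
The plan is to unwind the definition of a clause: a clause is by definition a disjunction of its literals, so proving $C = 1$ amounts to observing that a disjunction containing a true disjunct is itself true. The key algebraic fact I would invoke is the domination (annihilator) law of Boolean algebra, $1 \vee a = 1$ for every Boolean value $a$ — note this is a different identity from the identity law used in Section~\ref{sec:st}.

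First I would write $C = \ell_1 \vee \ell_2 \vee \dots \vee \ell_n$, where $\ell_1,\dots,\ell_n$ are the literals of $C$ and $x_i = \ell_k$ for some index $k$. Using commutativity and associativity of $\vee$, I would regroup this as $C = x_i \vee C'$, where $C'$ is the disjunction of the remaining $n-1$ literals (interpreting $C'$ as $0$ when $n = 1$). Substituting the hypothesis $x_i = 1$ gives $C = 1 \vee C'$, and the domination law with $a = C'$ yields $C = 1$, which is exactly the claim.

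An alternative, slightly more formal route avoids applying the Boolean law to a compound subexpression by inducting on $n$. The base case $n = 1$ forces $C = x_i = 1$ directly. For the inductive step, write $C = \ell_1 \vee C''$ with $x_i$ occurring among the literals of $C''$; by the induction hypothesis $C'' = 1$, so $C = \ell_1 \vee 1 = 1$, using only commutativity and the domination law applied to single Boolean values. I would likely present this version since it keeps every algebraic step at the level of scalars.

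I do not anticipate a real obstacle here, as the statement is essentially a restatement of the semantics of disjunction. The only points needing care are justifying the rearrangement of literals (associativity and commutativity of $\vee$, together with the convention for the empty disjunction) and citing the correct identity, namely the domination/annihilator law $1 \vee a = 1$, rather than the identity law.
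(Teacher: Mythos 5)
Your proposal is correct and follows essentially the same route as the paper's own proof: write $C$ as the disjunction of its literals, substitute $x_i=1$, and conclude by the dominance (domination) law of Boolean algebra. Your extra care with commutativity/associativity and the optional induction only makes explicit steps the paper leaves implicit.
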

\begin{proof}
	Given that, $C$ is a clause, by definition, $C$ is a disjunction of literals, so, we can write, 
	$$C=(x_1 \vee x_2 \vee \dots \vee x_i \vee \dots \vee  x_n)$$
	also, given that, $x_i=1$, so we can write,
	$$C=(x_1 \vee x_2 \vee \dots \vee 1 \vee \dots \vee  x_n)$$
	by using dominance law of boolean algebra,
	$$\implies C=1$$
	Hence proved.
\end{proof}

\begin{lemma}
	\label{lemma:c0x0}
	If $C$ is a clause, with n literals, such that, $C=0(false)$ then 
	$$x_i=0 \forall x_i \in C$$
\end{lemma}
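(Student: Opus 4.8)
The plan is to argue by contradiction, leaning directly on Lemma~\ref{lemma:x1c1} so that the two dual facts about clauses are proved in parallel. Since $C$ is a clause, by definition it is a disjunction of its literals, so I would first write $C = (x_1 \vee x_2 \vee \dots \vee x_n)$ exactly as in the proof of Lemma~\ref{lemma:x1c1}. Then I would suppose, for the sake of contradiction, that the conclusion fails, i.e.\ that there exists some $x_j \in C$ with $x_j = 1$.

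Next I would apply Lemma~\ref{lemma:x1c1} to this $x_j$: the hypothesis ``$\exists x_i \in C \mid x_i = 1$'' is met by $x_i = x_j$, so the lemma gives $C = 1$. This contradicts the hypothesis $C = 0$, since $0 \neq 1$ in Boolean algebra. Hence no such $x_j$ exists, which is precisely the statement that $x_i = 0$ for all $x_i \in C$. An equivalent route, if one prefers a direct computation over a contradiction, is to invoke the identity/null laws of Boolean algebra: a disjunction evaluates to $0$ only when every disjunct is $0$; I would likely mention this as the underlying algebraic fact even in the contradiction-based write-up.

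I do not anticipate a genuine obstacle here — the statement is the exact dual of Lemma~\ref{lemma:x1c1} and follows from it in one step. The only points requiring a little care are bookkeeping ones: making sure the disjunctive normal form of a clause is invoked explicitly (so the reader sees why ``all literals'' is the right quantifier), and making sure the contradiction is drawn against $C = 0$ rather than against the definition of tautology. Both are routine, so the proof should be as short as that of Lemma~\ref{lemma:x1c1}.
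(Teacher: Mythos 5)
Your proposal is correct and follows essentially the same route as the paper's own proof: write $C$ as a disjunction of its literals, assume for contradiction that some literal equals $1$, invoke \cref{lemma:x1c1} to conclude $C=1$, and contradict the hypothesis $C=0$. No meaningful differences to report.
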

\begin{proof}
	Given that, $C$ is a clause, by definition, $C$ is a disjunction of literals, so, we can write, 
	$$C=(x_1 \vee x_2 \vee \dots \vee  x_n)$$
	also, given that, 
	$$C=0(false)$$
	
	Suppose, for the sake of contradiction, $x_i=1$ for some $x_i \in C$. Using \cref{lemma:x1c1}, we get,
	$$C=1$$
	which is a contradiction, 
	Therefore, the assumption, $x_i=1$ is not true, hence, 
	\begin{equation}
	x_i=0 \; \forall x_i \in C
	\end{equation}
	Hence proved.
\end{proof}

\begin{theorem}
	\label{thm:c0d0}
	If $C$ and $D$ are clauses,
	such that, 
	$$	D \subseteq C $$
	and 
	$$ 	C=0 $$
	then $$D=0$$
\end{theorem}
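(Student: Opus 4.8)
The plan is to exploit the containment $D \subseteq C$ together with \cref{lemma:c0x0}. Since $C$ is a clause with $C = 0$, \cref{lemma:c0x0} immediately tells us that every literal occurring in $C$ is assigned the value $0$ under the truth assignment in question, i.e.\ $x_i = 0$ for all $x_i \in C$. Because $D \subseteq C$, every literal of $D$ is also a literal of $C$, so each literal of $D$ inherits the value $0$.

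Next I would write $D$ explicitly as a disjunction of its literals, say $D = (y_1 \vee y_2 \vee \dots \vee y_m)$ where each $y_j \in D \subseteq C$. By the previous step, $y_j = 0$ for every $j$, so $D = (0 \vee 0 \vee \dots \vee 0)$. Applying the identity/dominance law of Boolean algebra for disjunction (the dual of the law used in \cref{lemma:x1c1}), this simplifies to $D = 0$, which is exactly the claim.

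The argument is essentially a one-line application of \cref{lemma:c0x0}, so I do not anticipate a genuine obstacle. The only point deserving a remark is the degenerate case $D = \emptyset$: the empty clause is conventionally $0$ (false), so the conclusion still holds vacuously, and it may be worth a one-sentence aside to cover it. If the paper's conventions already exclude empty clauses, even that remark can be dropped, and the proof reduces to: literals of $D$ are literals of $C$, \cref{lemma:c0x0} forces them all to $0$, hence the disjunction $D$ evaluates to $0$.
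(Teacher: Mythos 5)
Your proposal is correct and follows essentially the same route as the paper's own proof: apply \cref{lemma:c0x0} to conclude every literal of $C$ is $0$, use $D \subseteq C$ to transfer this to the literals of $D$, and evaluate the disjunction to $0$. Your extra remark about the degenerate case $D = \emptyset$ is a sensible addition that the paper omits, but it does not change the substance of the argument.
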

\begin{proof}
	Given that,
	$$
	D \subseteq C
	$$
	and 
	\begin{equation}
	\label{eq:c01}
	C=0
	\end{equation} 
	
	using \cref{lemma:c0x0},
	\begin{equation}
	\label{eq:x0}
	x_i=0 \; \forall x_i \in C
	\end{equation}
	
	As $D \subseteq C$, 
	\begin{equation}
	\label{eq:xinD}
	\forall x \in D \implies x \in C
	\end{equation}
	From \cref{eq:x0} and \cref{eq:xinD}
	we have,
	$$
	x_j=0 \; \forall x_j \in D 
	$$
	We can write, 
	$$
	D=(0\vee0\vee0\vee \dots \vee 0)
	$$
	
	\begin{equation}
	\implies
	D=0
	\end{equation}
	Hence proved.
\end{proof}

\section{Clause over a variable set}
A non-tautology clause, $C$, is called a clause over a variable set, $V$, if,
$$(\forall x) (x\in C \; or \; \neg x \in C \implies x \in V)$$
For e.g. clauses, $C_1=\{x_1,x_2\}$ and $C_2=\{x_1,x_2,x_3\}$ are clauses over variable set, $V=\{x_1,x_2,x_3\}$
\subsection{Fully Populated Clause over a variable set}
A non-tautology clause, $C_{full}$, is called a fully populated clause over a variable set, $V$, if 
$$(\forall x) (x \in V \Leftrightarrow x\in C_{full} \; or\; \neg x \in C_{full})$$
For e.g. clause $C=\{x_1,\neg x_2\}$ is a fully populated clause over variable set, $V=\{x_1,x_2\}$

\begin{lemma}
	\label{lemma:cv}
	If $C$ is a clause over a variable set, $V$, then, $\exists V_{sub} \subseteq V$, such that, $C$ is a fully populated clause over $V_{sub}$.
\end{lemma}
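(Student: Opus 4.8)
The plan is to exhibit the required $V_{sub}$ explicitly, taking it to be precisely the set of variables that actually occur in $C$. Concretely, I would set
$$V_{sub} = \{\, x : x \in C \; or \; \neg x \in C \,\},$$
where each such $x$ is understood as a boolean variable (the underlying variable of the literal), so that $V_{sub}$ is indeed a variable set.

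First I would verify $V_{sub} \subseteq V$. Take any $x \in V_{sub}$; by construction $x \in C$ or $\neg x \in C$, and since $C$ is a clause over the variable set $V$, the defining property $(\forall x)(x\in C \; or \; \neg x \in C \implies x \in V)$ yields $x \in V$. Hence $V_{sub} \subseteq V$, as required.

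Next I would check that $C$ is a fully populated clause over $V_{sub}$, i.e.\ that $(\forall x)(x \in V_{sub} \Leftrightarrow x\in C \; or \; \neg x \in C)$. The left-to-right implication is immediate from the definition of $V_{sub}$; the right-to-left implication is equally immediate, since any literal of $C$ contributes its variable to $V_{sub}$ by construction. Finally, the definition of a fully populated clause additionally requires $C$ to be a non-tautology clause, and this is inherited directly from the hypothesis, because by definition a clause over a variable set is a non-tautology clause.

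I do not anticipate any genuine obstacle here; the statement reduces to unwinding the two definitions. The only points needing a little care are (i) consistently distinguishing literals from their underlying variables when forming $V_{sub}$, so that $V_{sub}$ is genuinely a set of variables, and (ii) the degenerate case in which $C$ contains no literals, where $V_{sub}=\emptyset$ and the biconditional holds vacuously — a case that can simply be excluded if clauses are taken to be nonempty.
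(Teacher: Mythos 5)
Your proposal is correct and is essentially identical to the paper's own proof: both define $V_{sub}=\{x \mid x\in C \text{ or } \neg x \in C\}$, derive $V_{sub}\subseteq V$ from the definition of a clause over $V$, and observe that the biconditional defining a fully populated clause then holds by construction. Your additional remarks on the non-tautology requirement and the empty-clause case are minor refinements the paper omits, but they do not change the argument.
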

\begin{proof}
	Given that, $C$ is a clause over a variable set, $V$, from definition,
	\begin{equation}
	\label{eq:fpc1}
		\implies (\forall x)( x \in C \; or \; \neg x \in C \implies x \in V )
	\end{equation}
	We define a variable set,
	\begin{equation}
		\label{eq:fpc2}
		V_{sub}=\{x|x \in C \; or \; \neg x \in C\}
	\end{equation}
	 $$\implies (\forall x) (x \in V_{sub} \Leftrightarrow x\in C \; or\; \neg x \in C)$$
	 also, from \cref{eq:fpc1} and \cref{eq:fpc2} 
	 $$\forall x\in V_{sub} \implies x\in V$$
	 $$\implies V_{sub} \subseteq V$$
	 Hence, $\exists V_{sub} \subseteq V$, such that, $C$ is a fully populated clause over variable set, $V_{sub}$.
\end{proof}
\begin{lemma}
	\label{lemma:fpcl2}
	If $C$ is a fully populated clause over a variable set, $V$, then $\forall D\subseteq C$, $\exists V_{sub} \subseteq V$, such that, $D$ is a fully populated clause over variable set $V_{sub}$
\end{lemma}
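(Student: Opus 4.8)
The plan is to reduce the statement to \cref{lemma:cv}: I will show that every subset $D$ of $C$ is itself a (non-tautology) clause over the \emph{same} variable set $V$, and then \cref{lemma:cv} will immediately hand us a $V_{sub} \subseteq V$ over which $D$ is fully populated.

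First I would verify that $D$ is a non-tautology clause. Since $C$ is a disjunction of literals and $D \subseteq C$, $D$ is a disjunction of a sub-collection of those same literals, hence a clause. Moreover, because $C$ is a non-tautology clause it contains no complemented pair of literals; as $D \subseteq C$, $D$ cannot contain a complemented pair either, so $D$ is a non-tautology clause.

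Next I would check the ``clause over $V$'' condition for $D$. By definition of a fully populated clause over $V$ we have $(\forall x)(x \in V \Leftrightarrow x \in C \; or \; \neg x \in C)$; in particular $(\forall x)(x \in C \; or \; \neg x \in C \implies x \in V)$, i.e. $C$ is a clause over $V$. Now take any literal with $x \in D$ or $\neg x \in D$. Since $D \subseteq C$, this gives $x \in C$ or $\neg x \in C$, and hence $x \in V$. Therefore $D$ is a clause over the variable set $V$.

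Finally, applying \cref{lemma:cv} to the clause $D$ over the variable set $V$, there exists $V_{sub} \subseteq V$ such that $D$ is a fully populated clause over $V_{sub}$, which is exactly the claim. I do not anticipate a real obstacle; the only point requiring a little care is the bookkeeping that a subset of a non-tautology clause is still a non-tautology clause (so that the hypotheses of \cref{lemma:cv} are actually met) and the observation that the set $V_{sub}$ produced depends on $D$, not merely on $C$.
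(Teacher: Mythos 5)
Your proposal is correct and follows essentially the same route as the paper's own proof: show that $D\subseteq C$ is a clause over $V$ and then invoke \cref{lemma:cv}. The only difference is that you explicitly verify that $D$ is non-tautology (a hypothesis the paper's proof silently skips), which is a small but welcome extra bit of care.
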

\begin{proof}
	Given that, $C$ is a fully populated clause over a variable set, $V$
	$$\implies (\forall x) (x \in V \Leftrightarrow x\in C \; or\; \neg x \in C)$$
	Suppose, $D\subseteq C$
	$$\implies \forall x \in D \implies x \in C$$
	$$\implies (\forall x) ( x\in D \; or \; \neg x \in D \implies x \in V)$$
	$\implies$ $D$ is a clause over $V$.
	
	From \cref{lemma:cv}, 
	$\exists V_{sub} \subseteq V$, such that, $D$ is a fully populated clause over $V_{sub}$
	
	Hence, $\forall D\subseteq C$, $\exists V_{sub} \subseteq V$, such that, $D$ is a fully populated clause over $V_{sub}$ 
	
\end{proof}

\begin{theorem}
		\label{lemma:ck0}
		For any given valuation to a variable set, $V$, there exist a fully populated clause, say $C_k$, over $V$, such that, 
		$$C_k=0(false)$$
\end{theorem}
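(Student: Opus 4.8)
The plan is to construct the required clause $C_k$ directly from the given valuation. Write $V=\{v_1,v_2,\dots,v_n\}$ and suppose the valuation assigns each $v_i$ a truth value. For each $i$ I would choose the literal $\ell_i$ over $v_i$ that is \emph{falsified} by the valuation: put $\ell_i=\neg v_i$ if $v_i$ is assigned $1(true)$, and $\ell_i=v_i$ if $v_i$ is assigned $0(false)$. Then define $C_k=\{\ell_1,\ell_2,\dots,\ell_n\}$, i.e.\ the disjunction $\ell_1\vee\ell_2\vee\dots\vee\ell_n$. Intuitively, $C_k$ is the ``negated minterm'' of the given assignment, so it is the unique full clause over $V$ that the assignment falsifies.

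Next I would check the three properties needed to conclude. First, $C_k$ is a non-tautology clause: for each variable $v_i$ exactly one of $v_i,\neg v_i$ was placed in $C_k$, so $C_k$ contains no complementary pair of literals, and hence by the definition of a tautology clause (equivalently, by the earlier lemma characterising non-tautology clauses) $C_k$ is a non-tautology clause. Second, $C_k$ is a fully populated clause over $V$: by construction, for every $x$ the condition ``$x\in C_k$ or $\neg x\in C_k$'' holds exactly when $x\in V$, which is precisely the defining condition $(\forall x)(x\in V \Leftrightarrow x\in C_k \text{ or } \neg x\in C_k)$.

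Finally I would show $C_k=0(false)$. By the choice of $\ell_i$, each $\ell_i$ evaluates to $0$ under the given valuation (it is the complement of the value assigned to $v_i$), so $C_k=(\ell_1\vee\ell_2\vee\dots\vee\ell_n)=(0\vee0\vee\dots\vee0)=0$ by the idempotence/identity law of Boolean algebra. Thus $C_k$ is a fully populated clause over $V$ with $C_k=0(false)$, which is exactly the claim.

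I do not expect a real obstacle; the statement is essentially the observation that the negation of a full ``minterm'' conjunction is a falsifiable full clause. The only points requiring a little care are (i) verifying that the constructed $C_k$ literally meets the formal definition of a fully populated clause over $V$, including the non-tautology requirement, and (ii) the degenerate case $V=\emptyset$, in which $C_k$ is the empty clause and is $0$ by convention; I would either dispatch this case in one line or note that $V\neq\emptyset$ is implicit in the surrounding development.
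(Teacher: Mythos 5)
Your construction is exactly the one the paper uses: take, for each variable, the literal falsified by the given valuation, and observe that the resulting clause is fully populated and evaluates to $0$. The proposal is correct and matches the paper's proof, differing only in that you more explicitly verify the non-tautology condition and note the degenerate case $V=\emptyset$.
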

\begin{proof}
	Let, the variable set, $V$, is given by,
	$$V=\{x_1,x_2,\dots, x_n \}$$
	where
	$$x_i= \begin{cases}
			0\\
			1\\
	\end{cases} \forall x_i \in V$$
	Let, each $x_i$ has been assigned any of the values given above.
	
	Now, we define a clause, $C_k$, depending upon the valuation assigned above, 
	
	$$C_k=\{y|y=\begin{cases}
				\neg x_i &if \; x_i=1\\
				x_i & if\;  x_i=0\\
	\end{cases} \; \forall x_i \in V \}$$
	
	$$\implies (\forall x) (x\in V \Leftrightarrow x \in C_k \;or\; \neg x \in C_k)$$
	
	$\implies$ $C_k$ is a fully populated clause. and,
	
	By putting values assigned for variables in $V$, in $C_k$, we get,
	$$x_i=0 \; \; \forall x_i \in C_k$$
	
	$$\implies C_k=0(false)$$
	Hence, for any given valuation to the variable set, $V$, there exists a fully populated clause, $C_k$, such that, $C_k=0(false)$
\end{proof}

\begin{theorem}
		\label{thm:2nclauses}
		For a given set of variables, $V$, with n variables, there exist $2^n$ fully populated clauses.
\end{theorem}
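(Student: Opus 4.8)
The plan is to exhibit an explicit bijection between the set of fully populated clauses over $V$ and the set $\{0,1\}^n$ of binary strings of length $n$; since $|\{0,1\}^n| = 2^n$, the count follows. Write $V = \{x_1, x_2, \dots, x_n\}$. First I would unpack the definition: a fully populated clause $C$ over $V$ satisfies $(\forall x)(x \in V \Leftrightarrow x \in C \text{ or } \neg x \in C)$, and, being a non-tautology clause, by the lemma stating that a non-tautology clause contains no complementary pair, for each $x_i \in V$ \emph{exactly one} of $x_i \in C$ or $\neg x_i \in C$ holds, while no literal on a variable outside $V$ occurs in $C$. Consequently $C$ is completely determined by the sign vector $s_C \in \{0,1\}^n$ defined by $s_C(i) = 1$ if $x_i \in C$ and $s_C(i) = 0$ if $\neg x_i \in C$.

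Next I would verify that the assignment $C \mapsto s_C$ is a bijection. For injectivity: if $s_C = s_D$, then $C$ and $D$ contain precisely the same literals, hence $C = D$ as sets. For surjectivity: given an arbitrary $s \in \{0,1\}^n$, define $C_s = \{x_i : s(i) = 1\} \cup \{\neg x_i : s(i) = 0\}$; since each variable of $V$ contributes exactly one literal and no literal outside $V$ appears, $C_s$ contains no complementary pair and is therefore a non-tautology clause, it satisfies the fully-populated condition over $V$ by construction, and plainly $s_{C_s} = s$. Hence the number of fully populated clauses over $V$ equals $|\{0,1\}^n| = 2^n$, which is the claim.

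\textbf{Main obstacle.} There is no deep difficulty here; the one point requiring care is that a clause is a \emph{set} of literals, so multiplicity and order are irrelevant, and that one must invoke the non-tautology property to upgrade ``at least one of $x_i, \neg x_i$ lies in $C$'' (from the fully-populated condition) to ``exactly one'', which is what makes the sign vector well defined and the map injective. As an alternative I would note that the same count follows by induction on $n$: for $n = 1$ the only fully populated clauses over $\{x_1\}$ are $\{x_1\}$ and $\{\neg x_1\}$, giving $2$; and each fully populated clause over $\{x_1, \dots, x_{n}\}$ arises uniquely from a fully populated clause over $\{x_1, \dots, x_{n-1}\}$ together with one of the two choices of literal on $x_n$, so the count doubles from $2^{n-1}$ to $2^{n}$.
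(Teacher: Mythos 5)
Your proposal is correct and takes essentially the same route as the paper, which also counts fully populated clauses by noting that each of the $n$ variables independently contributes one of two literals and multiplying to get $2^n$; your explicit bijection with $\{0,1\}^n$ is simply a more rigorous rendering of that counting step. The only substantive addition is your careful use of the non-tautology property to upgrade ``at least one of $x_i,\neg x_i$'' to ``exactly one,'' a point the paper's proof takes for granted.
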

\begin{proof}
		For a given set of variables, $V$, with n variables, we can write a fully populated clause in the general form, given by,
		$$C=\{ x_1, x_2, \dots , x_n \}$$
		where 
		$$x_i=\begin{cases}
				x_i\\
				\neg x_i\\
		\end{cases}$$
		i.e. each $x_i$ can be assigned a value in two ways, independently. As there are $n$ number of variables, in a clause. So, by using basic principle of counting, there are $2^n$ ways, in which, a clause $C$ can be selected. Hence, for a given variable set, $V$, with $n$ variables, there exist $2^n$ number of fully populated clauses.
\end{proof}

\section{Sibling Clause}
Two unequal fully populated clauses over a common variable set, $V$, are called sibling clauses. In other words, If $A$ and $B$ are two non-tautology clauses, such that:
\begin{enumerate}
	\item $(\forall x) (x \in V \Leftrightarrow x\in A \; or \; \neg x \in A \Leftrightarrow x \in B \; or \; \neg x \in B)$
	\item  $\exists x \in A | \neg x \in B$
\end{enumerate}
then $A$ is a sibling clause of $B$ and vice-versa. For e.g. $\{\neg x_1,x_2\}$ and $\{x_1,\neg x_2\}$ are sibling clauses over a variable set, $\{x_1,x_2\}$.

\begin{theorem}
	\label{lemma:a0b1}
	If A and B are two sibling clauses, and $A=0(false)$, then $B=1(true)$.
\end{theorem}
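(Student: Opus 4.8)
\emph{Proof proposal.} The plan is to exploit the second clause of the sibling definition to pin down a single distinguished literal, and then push truth values through \cref{lemma:c0x0} and \cref{lemma:x1c1}. No induction or case analysis on the structure of $V$ should be needed; the argument is a short chain of implications.

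First I would invoke the hypothesis that $A$ and $B$ are sibling clauses. By condition~2 of the definition of sibling clauses, there exists a literal $x$ with $x \in A$ and $\neg x \in B$. Next, using the hypothesis $A = 0$ together with \cref{lemma:c0x0}, every literal of $A$ evaluates to $0$ under the given valuation; in particular $x = 0$. Consequently $\neg x = 1$ under that same valuation. Finally, since $\neg x \in B$ and $\neg x = 1$, \cref{lemma:x1c1} applies to the clause $B$ and yields $B = 1$, which is the desired conclusion.

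I expect no serious obstacle here; the statement is essentially a corollary of the two earlier lemmas. The only point requiring care is the literal-versus-variable bookkeeping: one must be careful that the object $x$ supplied by condition~2 is genuinely a literal occurring as a disjunct of $A$ (so that \cref{lemma:c0x0} legitimately forces $x=0$) and that $\neg x$ is genuinely a disjunct of $B$ (so that \cref{lemma:x1c1} applies to $B$), and that the argument is insensitive to whether $x$ is a positive or a negated literal. It is also worth remarking that condition~1 of the sibling definition — that $A$ and $B$ range over a common variable set — is not actually used in this proof; only the existence of the complementary pair from condition~2 is needed.
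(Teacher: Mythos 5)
Your proposal is correct and follows essentially the same route as the paper's own proof: extract the complementary pair from condition~2 of the sibling definition, use \cref{lemma:c0x0} to force $x=0$ from $A=0$, conclude $\neg x=1$, and apply \cref{lemma:x1c1} to $B$. Your closing remark that condition~1 is not actually needed is also accurate.
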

\begin{proof}
	Given that $A$ and $B$ are sibling clauses.
	\begin{equation}
		\label{eq:s1}
		\implies \exists x_i \in A | \neg x_i \in B
	\end{equation}
	
	also, given that, $$A=0$$
	Using \cref{lemma:c0x0},
	$$\implies x_i=0 \; \forall x_i \in A$$
	from equation \cref{eq:s1},
	$$\implies \exists \neg x_i \in B | x_i=0$$
	put $y=\neg x_i$
	$$\implies \exists y \in B | y=1$$
	from \cref{lemma:x1c1},
	$$ B=1$$ Hence proved.
\end{proof}

\begin{theorem}
	\label{lemma:desibling}
	If $C_i$ and $C_j$ are two sibling clauses, over a variable set, $V$, and $\mathcal{P}(C_i)$ and $\mathcal{P}(C_j)$ are power sets of $C_i$ and $C_j$, respectively, then 
	$$\forall D \in \mathcal{P}(C_i) \; | \; D\notin \mathcal{P}(C_j)$$
	$$\exists E \in \mathcal{P}(C_j)$$
	such that, $D$ and $E$ are sibling clauses.
\end{theorem}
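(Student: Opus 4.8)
The plan is to produce $E$ explicitly as the restriction (projection) of $C_j$ onto the variables that actually occur in $D$, and then to check, one at a time, the two defining conditions of sibling clauses for the pair $D,E$. Fix an arbitrary $D \in \mathcal{P}(C_i)$ with $D \notin \mathcal{P}(C_j)$; unpacking the power-set notation, this says $D \subseteq C_i$ but $D \not\subseteq C_j$. Let $V_{sub} = \{x \mid x \in D \text{ or } \neg x \in D\}$ be the variable set of $D$. Since $C_i$ is fully populated over $V$ and $D \subseteq C_i$, \cref{lemma:fpcl2} gives that $D$ is a fully populated clause over $V_{sub}$ and that $V_{sub} \subseteq V$.

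Next I would define
\[
E = \{\, y \in C_j \mid y = v \text{ or } y = \neg v \text{ for some } v \in V_{sub} \,\}.
\]
By construction $E \subseteq C_j$, hence $E \in \mathcal{P}(C_j)$. The key verification is that $E$ is fully populated over exactly $V_{sub}$: for each $v \in V_{sub} \subseteq V$, because $C_j$ is fully populated over $V$ we have $v \in C_j$ or $\neg v \in C_j$, and because $C_j$ is a non-tautology clause it cannot contain both $v$ and $\neg v$, so exactly one literal over $v$ lies in $C_j$ and therefore in $E$; conversely, every literal of $E$ has its variable in $V_{sub}$ by definition. This yields $(\forall x)(x \in V_{sub} \Leftrightarrow x \in E \text{ or } \neg x \in E)$, and $E$ is non-tautology as a subset of the non-tautology clause $C_j$. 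So $E$ is a fully populated clause over the same variable set $V_{sub}$ as $D$.

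Finally I would verify the definition of sibling clauses for $D$ and $E$. Condition~(1) is immediate, since $D$ and $E$ are both fully populated over the common variable set $V_{sub}$. For condition~(2), observe that $E \in \mathcal{P}(C_j)$ whereas $D \notin \mathcal{P}(C_j)$, so $D \neq E$; since $D$ and $E$ are fully populated over the same $V_{sub}$ they cannot assign the same literal to every variable of $V_{sub}$, hence they disagree on some $v \in V_{sub}$, and the literal $x$ over $v$ lying in $D$ then satisfies $\neg x \in E$. This is exactly $\exists x \in D \mid \neg x \in E$, so $D$ and $E$ are sibling clauses. As $D$ was an arbitrary element of $\mathcal{P}(C_i) \setminus \mathcal{P}(C_j)$, this proves the statement.

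The argument is largely bookkeeping with the power-set and fully-populated definitions; the only step needing genuine care is the middle one — showing that $E$ is fully populated over precisely $V_{sub}$ — where ``at least one literal per variable of $V_{sub}$'' relies on $C_j$ being fully populated over the larger set $V$, and ``at most one'' relies on $C_j$ being non-tautology. The hypothesis $D \notin \mathcal{P}(C_j)$ is used exactly once, and only to obtain the strict inequality $D \neq E$ that drives condition~(2); without it $D$ and $E$ could coincide and fail to be siblings. I do not anticipate any obstacle beyond keeping the literal-versus-variable distinction straight throughout.
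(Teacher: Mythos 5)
Your proposal is correct and follows essentially the same route as the paper's own proof: both construct $E$ as the restriction of $C_j$ to the variable set $V_{sub}$ of $D$ (via \cref{lemma:fpcl2}), verify that $E\subseteq C_j$ and that $E$ is fully populated over $V_{sub}$, and use $D\notin\mathcal{P}(C_j)$ together with $E\in\mathcal{P}(C_j)$ to conclude $D\neq E$. The only difference is that you explicitly verify condition~(2) of the sibling definition (two unequal fully populated clauses over the same variable set must disagree in some literal), where the paper invokes this as immediate from its informal definition of sibling clauses.
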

\begin{proof}
	Given that, $C_i$ and $C_j$ are sibling clauses over a variable set, $V$. It implies, by definition of sibling clauses, $C_i$ and $C_j$ are fully populated clauses over $V$.
	
	From \cref{lemma:fpcl2}, 
	\begin{equation}
		\forall D\subseteq C_i,\; \exists V_{sub} \subseteq V
	\end{equation}
	such that, $D$ is a fully populated clause over $V_{sub}$.	
	Or we can write,
	\begin{equation}
	\label{eq:fpcl1}
	\forall D \in \mathcal{P}(C_i),\; \exists V_{sub} \subseteq V
	\end{equation}
	such that, $D$ is a fully populated clause over $V_{sub}$.	

	Now, We define a set, $E$,
	\begin{equation}
		E=\{y | y=\begin{cases}
					x, \; if \; x \in C_j\\
					\neg x \; if \; \neg x \in C_j\\
					\end{cases} \; and \; x\in V_{sub} \}
	\end{equation}
	$$\implies \forall x \in E \implies x \in C_j$$
	\begin{equation}
		\label{eq:fpcl2}
		\implies E\subseteq C_j
	\end{equation}

	as $C_j$ is a fully populated clause over $V$, and $V_{sub}\subseteq V$, 
	
	$$(\forall x) (x \in V \Leftrightarrow x \in C_j \; or \; \neg x \in C_j)$$
	
	$$\implies \forall x \in V_{sub} \implies x \in C_j \; or \; \neg x \in C_j$$
	
	$$\implies (\forall x)(x \in V_{sub} \Leftrightarrow x \in E \; or \; \neg x \in E)$$
	
	Thus, $E$ is a fully populated clause over $V_{sub}$.
	
	From \cref{eq:fpcl1} and \cref{eq:fpcl2}, 
	\begin{equation}
	\forall D \in \mathcal{P}(C_i),\; \exists E \in \mathcal{P}(C_j)
	\end{equation}
	such that, $D$ and $E$ are fully populated clauses over a common variable set, $V_{sub}$.
	\begin{equation}
	\label{eq:fpcl3}
		\implies (\forall x) (x\in V_{sub} \Leftrightarrow x \in D \; or \; \neg x \in D \Leftrightarrow x \in E \; or \; \neg x \in E)
	\end{equation}
	Now, there can be two cases, either $D=E$ or $D\ne E$,
	
	Suppose, $D=E$
	
	from \cref{eq:fpcl2}, $E\in \mathcal{P}(C_j)$
	$$\implies D \in \mathcal{P}(C_j)$$
	
	But, given that, $D \notin \mathcal{P}(C_j)$
	$$\implies D\ne E$$ 
	Thus, from \cref{eq:fpcl3}, 
	$\implies$ $D$ and $E$ are two unequal fully populated clauses over a common variable set, $V_{sub}$
	
	$\implies$ $D$ and $E$ are sibling clauses.
	
	Hence, 
	$$\forall D \in \mathcal{P}(C_i) \; | \; D\notin \mathcal{P}(C_j)$$
	$$\exists E \in \mathcal{P}(C_j)$$
	such that, $D$ and $E$ are sibling clauses.
\end{proof}

\section{Cardinality of a Boolean formula in CNF}
\label{sec:cCNF}
As we know that, a clause is a set of literals. For a variable $x$, there are two literals, i.e. $x$ and $\neg x$. Let us represent each literal as $l_i$. So, for each variable $x$, there are two literals $l_1$ and $l_2$. For a variable set, $V$, of $n$ variables, there are $2n$ literals. So, a general clause in a boolean formula, $F_{gen}$ can be written in the form, given by,
$$C=\{ l_1, l_2 \dots , l_n, l_{n+1}, l_{n+2}, \dots ,l_{2n} \} $$
where,
$$l_i=\begin{cases}
		l_i\\
		null\\
\end{cases} \; \; \forall l_i \in C
$$
As each $l_i$ can be selected in two ways, independently, and there are $2n$ literals in $C$. So, using fundamental counting principle, the total number of clauses possible are given by, 
$$|F_{gen}|=2 \times 2 \times 2 \times \dots 2n \; times  $$
$$\implies |F_{gen}|=2^{2n}$$

Hence, maximum possible cardinality of a boolean formula in CNF, is $2^{2n}$, including a null clause, $\phi$.
\section{Boolean formula in effective CNF}
A boolean formula in CNF, given by, $F$, is called a boolean formula in effective CNF, if it does not contain a tautology clause. We can write,
\begin{equation}
 \forall C \in F \implies C \; is \; not \; a \; tautology
\end{equation}
Or, $C$ is non-tautology clause. As discussed in \cref{sec:st} , a tautology clause has no effect on satisfiability of a CNF. So, for any given boolean formula, if we can identify tautology clauses, and ignore their existence, we can get an effective CNF.

\section{A complete boolean formula}
A boolean formula, $F_n$, containing every possible non-tautology clause, over a set of variables, $V$, including $null$ clause, is called a complete boolean formula. For eg. for variable set, $V=\{x_1, x_2\}$, the complete boolean formula is given by, 
$$
F_2=(x_1\vee x_2)\wedge
	(x_1\vee \neg x_2)\wedge
	(\neg x_1 \vee x_2)\wedge
	(\neg x_1 \vee \neg x_2)\wedge
	(x_1)\wedge
	(\neg x_1)\wedge
	(x_2)\wedge
	(\neg x_2)\wedge
	\phi
$$
where $\phi$ is a null clause.

In sets notation, it can be written as:
$$
F_2=\{\{x_1,x_2\},\{x_1,\neg x_2\},\{\neg x_1,x_2\},\{\neg x_1,\neg x_2\},\{x_1\},\{\neg x_1\},\{x_2\},\{\neg x_2\},\phi\}
$$

\begin{theorem}
	\label{thm:3n}
	If $F_n$ is a complete boolean formula, over a variable set, $V$, of $n$ variables, then $F_n$ contains $3^n$ clauses, including a $null$ clause.
\end{theorem}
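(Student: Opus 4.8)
The plan is to count the non-tautology clauses over $V$ by a direct application of the fundamental counting principle, classifying each clause according to what it does with each individual variable. First I would observe that, by the definition of a clause over $V$, every literal occurring in a non-tautology clause $C$ is either $x_i$ or $\neg x_i$ for some $x_i \in V$; hence $C$ is completely determined by specifying, for each variable $x_i \in V$, exactly one of three mutually exclusive options: (i) $x_i \in C$, (ii) $\neg x_i \in C$, or (iii) neither $x_i$ nor $\neg x_i$ lies in $C$. The non-tautology hypothesis is precisely what forbids the fourth conceivable case, $x_i \in C$ \emph{and} $\neg x_i \in C$; this is the content of the earlier lemma that a non-tautology clause $N$ satisfies $x \in N \implies \neg x \notin N$.

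Next I would argue that these $n$ ternary choices are independent and that the resulting assignment map, from the set of choice-vectors in $\{\text{i},\text{ii},\text{iii}\}^{\,n}$ to the set $F_n$ of non-tautology clauses over $V$, is a bijection: it is surjective because every non-tautology clause over $V$ prescribes exactly one of the three options at each variable, and it is injective because the choice-vector can be recovered from the clause. By the fundamental counting principle (the same argument used in \cref{thm:2nclauses}, but with three options per variable instead of two), $|F_n| = 3 \times 3 \times \cdots \times 3 = 3^n$. I would then note explicitly that the choice-vector taking option (iii) at every variable corresponds to the empty set of literals, i.e.\ the $null$ clause $\phi$, so that $\phi$ is genuinely one of the $3^n$ clauses being counted, which matches the phrase ``including a $null$ clause'' in the statement.

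The only place that needs care is the justification of the bijection — completeness of the enumeration and absence of double-counting — and this is essentially just unwinding the definitions of ``clause over $V$'' and ``non-tautology clause''; \cref{lemma:cv} can be invoked if one wants a fully formal bridge. As a sanity check one can observe that discarding option (iii) at every variable recovers exactly the $2^n$ fully populated clauses of \cref{thm:2nclauses}, consistent with $2^n \le 3^n$. An equivalent inductive route is also available: $F_0 = \{\phi\}$ gives $3^0 = 1$, and enlarging $V$ by one variable $x_{n+1}$ replaces each clause $C \in F_n$ by the three clauses $C$, $C \cup \{x_{n+1}\}$, $C \cup \{\neg x_{n+1}\}$, all non-tautology and pairwise distinct, and these exhaust the non-tautology clauses over the enlarged set; this yields $3 \cdot 3^n = 3^{n+1}$. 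I expect the bookkeeping that the three ``children'' are distinct and that nothing is missed to be the main (though modest) obstacle, and it is the same obstacle whether phrased combinatorially or inductively.
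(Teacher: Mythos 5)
Your proposal is correct and follows essentially the same route as the paper: encode each clause by an independent three-way choice per variable ($x_i$, $\neg x_i$, or absent), apply the fundamental counting principle to get $3^n$, and observe that the all-absent choice yields the $null$ clause $\phi$. The extra care you take in justifying the bijection (and the optional inductive variant) only makes explicit what the paper's proof leaves implicit.
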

\begin{proof}
	Given, that $F_n$ is a complete boolean formula, over $V$, and $V$ contains $n$ variables.
	$$\implies |V|=n$$
	
	From the definition of a complete boolean formula, we know that, 
	$$\forall C \in F_n \implies C \; is \; a \; non-tautology \; clause$$
	
	$$\implies \forall x \in C \implies \neg x \notin C$$
	We can write, a general clause in $F_n$ as,
	\begin{equation}
	C=(X_1,X_2,X_3,\dots X_n)
	\end{equation}
	where, $$
	X_i=\begin{cases}
	x_i \\
	\neg x_i \\
	null \\
	\end{cases}
	$$
	$X_i$ is a variable, which can be assigned the values $x_i, \neg x_i$ or $null$ independently. A $null$ value for $X_i$ means, the clause $C$, neither contain $x_i$ nor $\neg x_i$.
	
	Now, Each $X_i$ can be assigned a value in 3 different ways. By using fundamental counting principle, the total number of clauses possible is given by, 
	$$
	n(C)=3\times 3 \times 3 \dots n \; times
	\implies n(C)=3^n
	$$
	Also, there will be a clause in which, $X_i$ is assigned $null$ value $\forall X_i \in C$. It will result in a $null$ clause, given by, $\phi$.
	Hence, $F_n$ contains $3^n$ clauses, including a $null$ clause.
\end{proof}

\begin{corollary}
	\label{cor:3n-1}
	If $F_n$ is a complete boolean formula, over a variable set, $V$, with $n$ variables, then, for any given variable $x_i$,
	$$
	n(x_i)=n(\neg x_i)=n(x_{i-null})=3^{n-1}
	$$
	where,
	$n(x_i)$ is number of clauses containing $x_i$,
	
	$n(\neg x_i)$ is number of clauses containing $\neg x_i$
	
	$n(x_{i-null})$ is number of clauses containing neither $x_i$ nor $\neg x_i$
	
\end{corollary}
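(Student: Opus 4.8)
The plan is to reuse the parametrisation of clauses established in the proof of \cref{thm:3n}. There, every clause $C \in F_n$ was written uniquely as $C = (X_1, X_2, \dots, X_n)$, where each coordinate $X_j$ independently takes one of the three values $x_j$, $\neg x_j$, or $null$. The three quantities $n(x_i)$, $n(\neg x_i)$ and $n(x_{i-null})$ count precisely those clauses for which the $i$-th coordinate $X_i$ equals $x_i$, $\neg x_i$, or $null$, respectively, so the corollary reduces to a counting argument over the remaining coordinates.

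First I would fix $X_i$ to the value $x_i$. The remaining $n-1$ coordinates $X_1, \dots, X_{i-1}, X_{i+1}, \dots, X_n$ may each still be assigned a value in $3$ independent ways, so by the fundamental counting principle there are exactly $3^{n-1}$ clauses with $X_i = x_i$; that is, $n(x_i) = 3^{n-1}$. Repeating the identical argument with $X_i$ fixed to $\neg x_i$ yields $n(\neg x_i) = 3^{n-1}$, and with $X_i$ fixed to $null$ yields $n(x_{i-null}) = 3^{n-1}$. I would note explicitly that the $null$ clause $\phi$ is among the clauses counted by $n(x_{i-null})$, since it has $X_j = null$ for every $j$, in particular for $j = i$.

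As a consistency check I would observe that the three families of clauses just described are pairwise disjoint and their union is all of $F_n$, so their cardinalities must sum to $|F_n|$; indeed $3^{n-1} + 3^{n-1} + 3^{n-1} = 3 \cdot 3^{n-1} = 3^n$, in agreement with \cref{thm:3n}. I do not anticipate a genuine obstacle here — the argument is a direct specialisation of the counting already carried out for \cref{thm:3n}. The only point requiring a little care is to confirm that the case analysis on the value of $X_i$ is both exhaustive and mutually exclusive, which is immediate from the definition of the $X_i$ in \cref{thm:3n}, together with bookkeeping the $null$ clause correctly.
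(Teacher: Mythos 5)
Your proposal is correct and follows essentially the same route as the paper: fix the $i$-th coordinate $X_i$ to each of its three possible values in the parametrisation from \cref{thm:3n} and count the $3^{n-1}$ independent assignments of the remaining $n-1$ coordinates. The consistency check that the three disjoint families sum to $3^n$ is a small addition not present in the paper, but the core argument is identical.
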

\begin{proof}
	As explained in \cref{thm:3n}, for the given complete boolean formula $F_n$, with n clauses, we can write a clause in the form, given by,
	\begin{equation}
	\label{eq:c2}
	C=(X_1,X_2,X_3,\dots X_n)
	\end{equation}
	where,
	$$
	X_i=\begin{cases}
	x_i \\
	\neg x_i \\
	null \\
	\end{cases}
	$$
	Now, suppose, we put $X_i=x_i$ for some $i\in [1,n]$ in \cref{eq:c2}, we get,
	$$
	C=(X_1,X_2,X_3,\dots x_i, \dots X_n)
	$$
	$$
	C=(x_i, X_1,X_2,X_3,\dots X_n)
	$$
	We have assigned the value $x_i$ to one of the variables. There are $n-1$ variables remaining, to which, we can assign values independently. Each variable can be assigned 3 values independently. Thus, by using the fundamental counting principle, the total number of clauses with $x_i$ is given by,
	$$
	n(x_i)=3\times 3 \times 3 \dots n-1 \; times
	$$
	$$
	n(x_i)=3^{n-1}
	$$
	Similarily, by assigning $X_i=\neg x_i$ and $X_i=null$ we find 
	$$
	n(\neg x_i)=3^{n-1}
	$$
	and 
	$$
	n(x_{i-null})=3^{n-1}
	$$
	Hence, we get,
	$$
	n(x_i)=n(\neg x_i)=n(x_{i-null})=3^{n-1}
	$$
\end{proof}

\begin{corollary}
	\label{co:cbf1}
	A complete boolean formula, $F_n$ can be written as:
	$$	F_n=\mathcal{P}(C_1) \cup\mathcal{P}(C_2) \cup \dots \cup \mathcal{P}(C_p)$$
	 where $\{C_1,C_2,\dots C_p \}$is set of all poosible fully populated clauses over a set of variables, $V$.
\end{corollary}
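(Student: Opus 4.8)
The plan is to establish the claimed set equality by proving the two inclusions $\bigcup_{i=1}^{p}\mathcal{P}(C_i)\subseteq F_n$ and $F_n\subseteq\bigcup_{i=1}^{p}\mathcal{P}(C_i)$ separately, using the already-established facts that subsets of fully populated clauses are themselves fully populated clauses over sub-variable-sets, and that any clause over $V$ can be completed to a fully populated clause over $V$.

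For the inclusion $\bigcup_{i}\mathcal{P}(C_i)\subseteq F_n$, I would fix an arbitrary fully populated clause $C_i$ over $V$ and an arbitrary $D\in\mathcal{P}(C_i)$, i.e. $D\subseteq C_i$. By \cref{lemma:fpcl2}, $D$ is a fully populated clause over some $V_{sub}\subseteq V$; in particular $D$ is a non-tautology clause every literal of which is over a variable in $V$, so $D$ is a non-tautology clause over $V$. By the definition of a complete boolean formula, $D\in F_n$ (the case $D=\phi$ is covered, since $\phi\subseteq C_i$). As $i$ was arbitrary, $\mathcal{P}(C_i)\subseteq F_n$ for every $i$, and taking the union over $i=1,\dots,p$ yields the desired inclusion.

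For the reverse inclusion, I would take an arbitrary $C\in F_n$; by definition $C$ is a non-tautology clause over $V$ (possibly $\phi$). By \cref{lemma:cv}, $C$ is a fully populated clause over some $V_{sub}\subseteq V$. Writing $V\setminus V_{sub}=\{x_{j_1},\dots,x_{j_k}\}$, I would form $C^{*}=C\cup\{x_{j_1},\dots,x_{j_k}\}$ by adjoining exactly one literal (say the positive one) for each variable of $V$ not already occurring in $C$. Then every variable of $V$ is represented in $C^{*}$ exactly once, and no complemented pair arises, so $C^{*}$ is a non-tautology clause and in fact a fully populated clause over $V$; hence $C^{*}=C_m$ for some $m\in\{1,\dots,p\}$. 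Since $C\subseteq C^{*}=C_m$, we get $C\in\mathcal{P}(C_m)\subseteq\bigcup_{i}\mathcal{P}(C_i)$ (if $C=\phi$, take $C^{*}$ to be any fully populated clause over $V$). This gives $F_n\subseteq\bigcup_{i}\mathcal{P}(C_i)$, and combining the two inclusions proves the corollary.

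The two membership checks are routine; the one point requiring a little care is verifying that the completion $C^{*}$ is genuinely a non-tautology fully populated clause over all of $V$, which rests on $C$ being non-tautology (so $C$ contributes at most one literal per variable) together with the adjoined literals lying on variables untouched by $C$. Once that is settled, the assignment $C\mapsto C^{*}$ is precisely what places each clause of $F_n$ inside some $\mathcal{P}(C_i)$, and no further combinatorial input — not even the count $p=2^n$ from \cref{thm:2nclauses} — is needed.
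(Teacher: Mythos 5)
Your proposal is correct, and it rests on the same underlying observation as the paper's proof: a clause over $V$ amounts to choosing, for each variable, one of $x_i$, $\neg x_i$, or ``absent,'' and the fully populated clauses are exactly the choices with nothing absent. The difference is one of rigor and structure. The paper argues by enumeration of slot assignments: it writes a general clause as $(X_1,\dots,X_n)$ with $X_i\in\{x_i,\neg x_i,null\}$, first fixes the non-null choices to generate $F_{full}=\{C_1,\dots,C_p\}$, then relaxes each $C_i$'s slots to $null$ to generate $\mathcal{P}(C_i)$, and simply asserts that this sweep produces ``all possible clauses over $V$.'' You instead prove the set equality by double inclusion, and the reverse inclusion is where you add genuine content: your explicit completion $C^{*}=C\cup\{x_{j_1},\dots,x_{j_k}\}$ (adjoining a literal for each variable missing from $C$) is precisely the justification, absent from the paper, that every clause of $F_n$ lies in the power set of \emph{some} fully populated clause. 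Your forward inclusion via \cref{lemma:fpcl2} likewise makes explicit the paper's implicit claim that every subset of a fully populated clause is a legitimate non-tautology clause over $V$ and hence a member of $F_n$. So both arguments buy the same conclusion, but yours actually discharges the surjectivity step the paper hand-waves, at the modest cost of the small verification that $C^{*}$ contains no complemented pair; and you are right that neither argument needs the count $p=2^n$.
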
 
\begin{proof}
	As explained in \cref{thm:3n}, for the given complete boolean formula $F_n$, with n clauses, we can write a clause in the form, given by,
	\begin{equation}
	\label{eq:c3}
	C=(X_1,X_2,X_3,\dots X_n)
	\end{equation}
	where,
	$$
	X_i=\begin{cases}
	x_i \\
	\neg x_i \\
	null \\
	\end{cases}
	$$
	But, First, if we assign 
	$$
	X_i=\begin{cases}
	x_i \\
	\neg x_i \\
	\end{cases} \forall X_i \in C
	$$
	We get a set of all fully populated clauses over $V$, say $F_{full}$, given by,
	$$F_{full}=\{C_1,C_2,\dots, C_p \}$$
	
	Then, we assign, for any $C_i \in F_{full}$
	$$
	X_i=\begin{cases}
	x \; | \; x \in C_i \\
	null\\
	\end{cases} 
	$$
	We get power set of clause $C_i$. By assigning vaues, as above, $\forall C_i \in F_{full}$, we get all possible clauses over $V$. Thus, we can write:
	\begin{equation}
	\label{eq:cf}
	F_n=\mathcal{P}(C_1) \cup\mathcal{P}(C_2) \cup \dots \cup \mathcal{P}(C_p)
	\end{equation}
	Hence proved.
\end{proof}

\begin{theorem}
	\label{thm:2n-1}
	If $\mathcal{P}(C)$ is a powerset of C, where C is a fully populated clause, over a variable set, $V$, with $n$ variables, then, 
	$$
	n(x_i)=n(x_{i-null})=2^{n-1} \; \; \forall x_i \in C
	$$
	where,
	$n(x_i)$ is number of clauses containing $x_i$, in $\mathcal{P}(C)$ and \\
	$n(x_{i-null})$ is number of clauses not containing $x_i$, in $\mathcal{P}(C)$
	
\end{theorem}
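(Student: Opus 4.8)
The plan is to reduce the statement to a straightforward application of the fundamental counting principle on the elements of $C$, exactly in the style of \cref{thm:3n} and \cref{cor:3n-1}. First I would establish that a fully populated clause $C$ over a variable set $V$ with $n$ variables has exactly $n$ literals: by the defining biconditional $(\forall x)(x \in V \Leftrightarrow x \in C \text{ or } \neg x \in C)$ together with the non-tautology property (so that $x \in C$ and $\neg x \in C$ cannot hold simultaneously, as in the first lemma on non-tautology clauses), each of the $n$ variables of $V$ contributes exactly one literal to $C$; hence $|C| = n$. I would then write $C = \{l_1, l_2, \dots, l_n\}$, where each $l_j$ is either $x_j$ or $\neg x_j$, and fix an arbitrary literal $x_i \in C$.

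Next I would count the members of $\mathcal{P}(C)$ that contain $x_i$. Such a subset $D$ is determined by including $x_i$ and then, for each of the remaining $n-1$ literals of $C$, independently deciding whether or not to include it. By the fundamental counting principle this gives $n(x_i) = 2 \times 2 \times \dots$ ($n-1$ times) $= 2^{n-1}$. Then I would count the members of $\mathcal{P}(C)$ that do not contain $x_i$: such a subset is precisely an arbitrary subset of $C \setminus \{x_i\}$, and since $|C \setminus \{x_i\}| = n-1$, the same principle gives $n(x_{i-null}) = 2^{n-1}$. Combining the two counts yields $n(x_i) = n(x_{i-null}) = 2^{n-1}$ for every $x_i \in C$, which is the claim.

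As a consistency check I would note $n(x_i) + n(x_{i-null}) = 2^{n-1} + 2^{n-1} = 2^n = |\mathcal{P}(C)|$, which matches the total number of subsets of an $n$-element clause. I do not anticipate any genuine obstacle in this proof; the only mildly non-routine point is confirming that a fully populated clause over an $n$-variable set has exactly $n$ literals, and that follows immediately from the definition of a fully populated clause and the earlier lemma guaranteeing a non-tautology clause cannot contain a complementary pair. Every remaining step is the independent-choices counting argument already employed earlier in the paper.
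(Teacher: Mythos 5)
Your proposal is correct and follows essentially the same route as the paper: both fix the literal $x_i$ (or its absence) and apply the fundamental counting principle to the remaining $n-1$ independent include/exclude choices. The only difference is that you explicitly justify $|C|=n$ from the definition of a fully populated clause, which the paper's proof takes for granted.
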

\begin{proof}
	Given that, $C$ is a fully populated clause,  over a variable set, $V$, with $n$ variables, and $\mathcal{P}(C)$ is a power set of $C$.
	Let $D\in \mathcal{P}(C)$. We can write, $D$, in general form,
	$$
	D=(X_1,X_2,X_3 \dots X_n) 
	$$ 
	where,
	$$
	X_i=\begin{cases}
	x_i & x_i \in C\\
	null \\
	\end{cases}
	$$
	Now, if we put $X_i=x_i$ for some $i\in [1,n]$, we get,
	$$
	D=(X_1,X_2,X_3 \dots x_i \dots X_n)
	$$
	$$
	D=(x_i, X_1,X_2,X_3 \dots X_n)
	$$
	We have assigned the value $x_i$ to one of the variables. There are $n-1$ variables re-
	maining, to which, we can assign values independently. Each variable can be assigned
	2 values independently. Thus, by using the fundamental counting principle, the total
	number of clauses with $x_i$ is given by,
	
	$$
	n(x_i)=2 \times2\times2\dots (n-1) times
	$$
	$$
	n(x_i)=2^{n-1}
	$$
	Similarily, by assigning $X_i=null$ we find  
	$$
	n(x_{i-null})=2^{n-1}
	$$
	Hence, we get,
	$$
	n(x_i)=n(x_{i-null})=2^{n-1}
	$$
\end{proof}

\begin{theorem}
		\label{thm:SatF}

		If there exists a fully populated clause, $C_k$ over $V$, such that, 
		$$F=F_n \setminus \mathcal{P}(C_k)$$
		where, $F_n$ is a complete boolean formula over $V$, then, $F$ is satisfiable.
\end{theorem}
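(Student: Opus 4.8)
The plan is to produce a single valuation of $V$ that satisfies every clause of $F$, namely the valuation that falsifies $C_k$. Concretely, for each variable I would set it so that the literal of $C_k$ over that variable becomes $false$: if $x \in C_k$ put $x=0$, and if $\neg x \in C_k$ put $x=1$. Since $C_k$ is a fully populated clause over $V$ this determines the valuation on every variable of $V$, and by construction every literal of $C_k$ is $false$, so $C_k=0$ under this valuation; this is just the construction in \cref{lemma:ck0} read in reverse.

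Next I would take an arbitrary clause $D \in F$ and show it evaluates to $true$ under this valuation. Since $F = F_n \setminus \mathcal{P}(C_k)$, we have $D \in F_n$ and $D \notin \mathcal{P}(C_k)$; in particular $D \neq \phi$, because $\phi \subseteq C_k$ forces $\phi \in \mathcal{P}(C_k)$, so the empty clause never belongs to $F$. By \cref{co:cbf1}, $F_n$ is the union of the power sets of all fully populated clauses over $V$, hence $D \in \mathcal{P}(C_i)$ for some fully populated clause $C_i$ over $V$. If $C_i = C_k$ we would get $D \in \mathcal{P}(C_k)$, contradicting $D \notin \mathcal{P}(C_k)$; therefore $C_i \neq C_k$, and $C_i, C_k$ are two unequal fully populated clauses over the common variable set $V$, i.e. sibling clauses.

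Now I would invoke \cref{lemma:desibling} on the sibling pair $C_i, C_k$: since $D \in \mathcal{P}(C_i)$ and $D \notin \mathcal{P}(C_k)$, there exists $E \in \mathcal{P}(C_k)$ such that $D$ and $E$ are sibling clauses. Because $E \in \mathcal{P}(C_k)$ we have $E \subseteq C_k$, and since $C_k = 0$ under our valuation, \cref{thm:c0d0} gives $E = 0$. Finally, $D$ and $E$ are siblings with $E=0$, so \cref{lemma:a0b1} yields $D = 1$. As $D \in F$ was arbitrary, the chosen valuation satisfies every clause of $F$, so $F$ is satisfiable.

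I do not expect a genuine obstacle: the statement is essentially an assembly of the earlier results. The points that need care are (i) choosing the valuation that falsifies $C_k$ rather than an arbitrary valuation, (ii) noting that $\phi \notin F$ so there is no empty clause to break satisfiability, and (iii) verifying $C_i \neq C_k$ so that the sibling relation — and hence the applicability of \cref{lemma:desibling} and \cref{lemma:a0b1} — actually holds. If anything is delicate, it is making sure \cref{lemma:desibling} is used in the correct direction, producing a sibling subset of $C_k$ from the given subset $D$ of $C_i$.
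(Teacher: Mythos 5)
Your proposal is correct and follows essentially the same route as the paper's own proof: fix the valuation that falsifies $C_k$, locate each $D\in F$ inside some $\mathcal{P}(C_i)$ with $C_i\neq C_k$ via \cref{co:cbf1}, and then chain \cref{lemma:desibling}, \cref{thm:c0d0}, and \cref{lemma:a0b1} to conclude $D=1$. You are merely more explicit than the paper on three small points (writing out the falsifying valuation, observing $\phi\notin F$, and justifying $C_i\neq C_k$), which are details the paper glosses over rather than a different argument.
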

\begin{proof}
		Given that, 
		$$F=F_n \setminus \mathcal{P}(C_k)$$
		
		Suppose $D$ is any clause in $F$, i.e. $D\in F$
		
		$$\implies D \in F_n \setminus \mathcal{P}(C_k)$$
		
		$$\implies D \in F_n \; | \; D \notin \mathcal{P}(C_k)$$
		
		Let $F_{full}=\{C_1, C_2, \dots , C_p\}$ is a set of all fully populated clauses over $V$, then from \cref{co:cbf1}
		$$\implies D \in \mathcal{P}(C_1)\cup \mathcal{P}(C_2)\cup \dots \cup \mathcal{P}(C_p) \; | \; D \notin \mathcal{P}(C_k)$$
		
		As, $C_1, C_2, \dots , C_p$ are unequal fully populated clauses over $V$, which implies, from the definition of sibling clauses, $C_1, C_2, \dots , C_p$ are sibling clauses, including $C_k$.
		
		Let $D \in \mathcal{P}(C_i)$, where $C_i$ is any clause in $F_{full}$ but $C_i \ne C_k$
		$$\implies D \in \mathcal{P}(C_i) \; | \; D \notin \mathcal{P}(C_k)$$
		from \cref{lemma:desibling} 
		$$\exists E \in \mathcal{P}(C_k)$$
		such that, $D$ and $E$ are sibling clauses.

		As $D$ is any clause in $F$
		$$\implies \forall D \in F, \; \exists E \in \mathcal{P}(C_k)$$
		such that, $D$ and $E$ are sibling clauses.
		
		As, $C_k$ is a fully populated clause, so for a valuation, given by,
		$$C_k=0(false)$$
		from \cref{thm:c0d0}
		$$\forall E \subseteq C_k \implies E=0(false)$$
		$$\implies \forall E \in \mathcal{P}(C_k) \implies E=0(false)$$
		As $D$ and $E$ are sibling clauses, from \cref{lemma:a0b1}
		$$\implies \forall D \in F \implies D=1(true)$$
		for valuation $C_k=0(false)$
		
		$\implies$ $F$ is  satisfiable.
		
\end{proof}

\begin{theorem}
	\label{thm:SubF}
	If $F$ is satisfiable, and $F_{sub} \subseteq F$, then $F_{sub}$ is satisfiable.
\end{theorem}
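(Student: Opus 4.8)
The plan is to exploit the fact that satisfiability of a boolean formula in CNF requires only a single valuation under which every one of its clauses evaluates to $true$, together with the set-theoretic containment $F_{sub} \subseteq F$. Since a boolean formula in CNF is just a (finite) set of clauses interpreted as their conjunction, satisfiability of $F$ amounts to the existence of one valuation making all members of $F$ true simultaneously, and the same valuation will automatically serve for any subset.

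First I would invoke the hypothesis that $F$ is satisfiable to fix a valuation, say $\tau$, of the underlying variable set such that every clause $C \in F$ has $C = 1(true)$ under $\tau$; this is precisely what it means for the conjunction $F$ to be satisfiable.

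Next I would take an arbitrary clause $D \in F_{sub}$. Because $F_{sub} \subseteq F$, we have $D \in F$, and hence $D = 1(true)$ under the very same valuation $\tau$. Since $D$ was arbitrary, this gives $\forall D \in F_{sub} \implies D = 1(true)$ under $\tau$, so the conjunction of the clauses of $F_{sub}$ evaluates to $true$ under $\tau$. Therefore $F_{sub}$ is satisfiable, witnessed by $\tau$ itself.

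I do not expect any real obstacle here: the statement follows immediately once the definition of satisfiability of a CNF (as a conjunction over a finite set of clauses) and the meaning of $\subseteq$ are unwound. The only point worth stating explicitly is that the satisfying assignment for $F$ is reused verbatim for $F_{sub}$ — no new assignment must be constructed, and no case analysis on the type of clauses (tautology, non-tautology, fully populated) is required.
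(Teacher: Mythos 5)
Your proposal is correct and matches the paper's own proof essentially verbatim: both fix a satisfying valuation for $F$, observe that every clause of $F_{sub}$ lies in $F$ by the subset hypothesis, and conclude that the same valuation makes every clause of $F_{sub}$ true. No differences worth noting.
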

\begin{proof}
	Given that, 
	$$F_{sub} \subseteq F$$
	$$\implies \forall C \in F_{sub} \implies C \in F$$
	As $F$ is satisfiable. It implies, there exists a valuation, such that, $\forall D \in F \implies D=1(true)$
	$$\implies \forall C \in F_{sub} \implies C=1(true)$$
	Hence, $F_{sub}$ is satisfiable.
\end{proof}

\begin{theorem}
		\label{thm:FSat}
		If $F$ is satisfiable, then, there exists a fully populated clause, $C_k$, such that,
		$$\forall E \in \mathcal{P}(C_k) \implies E \notin F$$
\end{theorem}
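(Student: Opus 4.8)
The plan is to turn the satisfiability hypothesis into a concrete valuation and then push that valuation through \cref{lemma:ck0} and \cref{thm:c0d0}. First I would fix, once and for all, a valuation $\sigma$ of the variable set $V$ on which $F$ is defined such that every clause of $F$ evaluates to $1(true)$ under $\sigma$; this is exactly what ``$F$ is satisfiable'' means. Then, applying \cref{lemma:ck0} to the valuation $\sigma$ on $V$, there exists a fully populated clause $C_k$ over $V$ with $C_k = 0(false)$ under $\sigma$ (concretely, the clause that contains $\neg x_i$ wherever $\sigma$ makes $x_i$ true and $x_i$ wherever $\sigma$ makes $x_i$ false).

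Next I would take an arbitrary $E \in \mathcal{P}(C_k)$, so $E \subseteq C_k$, and invoke \cref{thm:c0d0} with $C = C_k$ and $D = E$: since $C_k = 0$ under $\sigma$, it follows that $E = 0(false)$ under $\sigma$ as well. In particular this also handles the null clause $\phi \in \mathcal{P}(C_k)$, which is false under every valuation. Finally I would argue by contradiction: if some $E \in \mathcal{P}(C_k)$ were also a member of $F$, then, because $\sigma$ satisfies every clause of $F$, we would have $E = 1(true)$ under $\sigma$, contradicting $E = 0(false)$. Hence no element of $\mathcal{P}(C_k)$ lies in $F$, which is precisely the assertion $\forall E \in \mathcal{P}(C_k) \implies E \notin F$.

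I do not expect a genuine obstacle here; the only thing requiring care is the bookkeeping around the variable set, namely ensuring that the hypothesis is read as ``a single valuation of all of $V$ makes every clause true'' and that the $C_k$ delivered by \cref{lemma:ck0} is a fully populated clause over that same $V$, so that \cref{thm:c0d0} applies without adjustment. This theorem is the converse companion to \cref{thm:SatF}: \cref{thm:SatF} shows $F_n \setminus \mathcal{P}(C_k)$ is always satisfiable, while the present statement shows every satisfiable $F$ omits all of $\mathcal{P}(C_k)$ for some fully populated $C_k$, i.e. $F \subseteq F_n \setminus \mathcal{P}(C_k)$; together they give the necessary and sufficient condition for satisfiability of a CNF.
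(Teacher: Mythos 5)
Your proposal is correct and uses exactly the same ingredients as the paper's proof: \cref{lemma:ck0} to produce the fully populated clause $C_k$ falsified by the valuation, and \cref{thm:c0d0} to falsify every $E\in\mathcal{P}(C_k)$. The only difference is framing — the paper argues by contradiction (assuming no such $C_k$ exists and deriving unsatisfiability), whereas you run the argument directly from the satisfying valuation; this is merely the contrapositive of the same reasoning, and your direct version is, if anything, slightly cleaner.
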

\begin{proof}
		Given that, $F$ is satisfiable.

		Now, suppose, for the sake of contradiction, that, there does not exist a fully populated clause, $C_k$, such that,
		$$\forall E \in \mathcal{P}(C_k) \implies E \notin F$$
		
		$$\implies \forall C_k, \; \exists E \in \mathcal{P}(C_k) | E\in F$$
		
		From \cref{lemma:ck0}, for any valuation, to the variable set, $V$, $\exists C_k | C_k=0(false)$
		
		$\implies$ for any valuation, to the variable set, $V$, in which, $C_k=0(false)$
		
		$$\exists E \in \mathcal{P}(C_k) | E \in F$$
		
		$E \in \mathcal{P}(C_k) \implies E \subseteq C_k$, from \cref{thm:c0d0}, for any valuation, to the variable set, $V$, in which, $C_k=0(false) \implies E=0(false)$
		
		$\implies$ for any valuation, to the variable set, $V$, 
		$$\exists E \in F | E=0$$
		
		$\implies$ $F$ is unsatisfiable. Which is a contradiction, so our assumption was wrong. Hence, there exist a fully populated clause, $C_k$, such that,
		$$\forall E \in \mathcal{P}(C_k) \implies E \notin F $$
		Hence proved.
		 
\end{proof}
\begin{theorem}[Necessaery and Sufficient Condition for Satisfiability]
		\label{thm:satiff}
		$F$ is satisfiable, if and only if, there exists a fully populated clause, $C_k$, such that,  
		$$\forall E \in \mathcal{P}(C_k) \implies E \notin F$$
\end{theorem}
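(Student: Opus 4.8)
The plan is to prove the biconditional by invoking the two one-directional implications that have already been established in the excerpt. This is essentially a packaging step: \cref{thm:FSat} gives us one direction and \cref{thm:SatF}, together with \cref{thm:SubF}, gives us the other.

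First I would handle the forward direction. Assume $F$ is satisfiable. Then \cref{thm:FSat} applies directly and yields the existence of a fully populated clause $C_k$ with $\forall E \in \mathcal{P}(C_k) \implies E \notin F$. So this half is immediate.

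Next I would handle the converse. Assume there exists a fully populated clause $C_k$ over $V$ such that $\forall E \in \mathcal{P}(C_k) \implies E \notin F$. I need to produce a satisfying valuation for $F$. The idea is to embed $F$ inside the complete boolean formula $F_n$ over $V$ and then cut out $\mathcal{P}(C_k)$. Concretely, since $F \subseteq F_n$ (every non-tautology clause over $V$ lies in $F_n$; tautology clauses can be discarded as in \cref{sec:st}), and since no element of $\mathcal{P}(C_k)$ belongs to $F$, we get $F \subseteq F_n \setminus \mathcal{P}(C_k)$. By \cref{thm:SatF}, $F_n \setminus \mathcal{P}(C_k)$ is satisfiable, and then by \cref{thm:SubF} any subset of it — in particular $F$ — is satisfiable. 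Combining the two directions gives the stated equivalence.

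The main obstacle, such as it is, is the bookkeeping in the converse: I must be careful that $F$ really is a subset of $F_n$. This requires that $F$ be a boolean formula in effective CNF over the variable set $V$ (i.e. $F$ contains only non-tautology clauses, all of whose literals come from $V$), which is the standing setup of the paper; any genuine tautology clauses in $F$ are harmless and may be deleted first without changing satisfiability. Once that subset relation is in hand, the rest is a direct appeal to \cref{thm:SatF} and \cref{thm:SubF}, so no new combinatorial work is needed.
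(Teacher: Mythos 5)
Your proposal is correct and follows essentially the same route as the paper's own proof: the forward direction is a direct appeal to \cref{thm:FSat}, and the converse embeds $F$ into $F_n \setminus \mathcal{P}(C_k)$ and applies \cref{thm:SatF} and \cref{thm:SubF}. Your explicit remark that $F$ must consist of non-tautology clauses over $V$ for the inclusion $F \subseteq F_n$ to hold is a point the paper leaves implicit, but it does not change the argument.
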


\begin{proof}
	Let $F$ is satisfiable, from \cref{thm:FSat}, it implies,
	\begin{equation}
		\label{eq:satiff1}
		\forall E \in \mathcal{P}(C_k) \implies E \notin F 
	\end{equation}
	
	Conversely, Let there exists a fully populated clause, $C_k$, such that,
	
	$$\forall E \in \mathcal{P}(C_k) \implies E \notin F $$
	
	Let $F_n$ is a complete boolean formula, over $V$, then we can write,
	$$F\subseteq F_n \setminus \mathcal{P}(C_k)$$
	
	From \cref{thm:SatF} and \cref{thm:SubF}, we get, $F$  is satisfiable.

	 Hence, $F$ is satisfiable, if and only if, there exists a fully populated clause, $C_k$, such that,  
	 $$\forall E \in \mathcal{P}(C_k) \implies E \notin F$$
\end{proof}

\section{Algorithm to solve boolean satisfiability problem}
We know that, a given boolean formula,$F$ is satisfiable, if and only if, there exists a fully populated clause, $C_k$, such that, 
$$\forall E \in \mathcal{P}(C_k) \implies E \notin F$$
Now, in order to find such clause, $C_k$, we must know set of variables, $V$, over which boolean formula, $F$, has been defined. So, the first step of the algorithm will be to find the variable set, $V$, then generate set of fully populated clauses, $S$, over $V$. In next step, for each clause in $F$, we eliminate their superset fully populated clauses from $S$. After processing all clauses in $F$, $S$ contains fully populated clauses, which (or their subsets) are absent in $F$. Now, for each remaining fully populated clause, $C_{absent}$, we can return a solution, $S_i$ such that,

$$
S_i=\{ x | x= \neg y \forall y \in C_{absent} \}
$$
or simply, we can print 
$F$ is satisfiable for, $C_{absent}=0$

However, if no fully populated clause remained after processing all clauses, then $F$ is unsatisfiable, from \cref{thm:satiff}

\subsection{Representation of fully populated clauses}
Fully populated clauses can be represented in the form of a binary tree, as shown in \cref{imgfpctree}, in which a node contains the index value, $i$, of a variable, $x_i$. Non-null value in "left" pointer represents literal $\neg x_i$ and non-null value in "right" pointer represents literal $x_i$. 
 Each edge, $E$ of an $i^{th}$ node represents a clause over $V$, where $V=\{ x_1,x_2, \dots x_i \}$. 
 
 "left" and "right" pointers of a node can have following values: 
 \begin{itemize}
 	\item "OPEN": Initial value, represents insertion point for a new variable.
 	\item Address of child node.
 	\item "NULL": Null value represents presence of subset of fully populated clauses in $F$, hence exclusion from solution.
 \end{itemize}
 
 \begin{figure}[H]
	\label{imgfpctree}
	\caption{Representation of fully populated clauses over  $\{ x_1,x_2 \}$ as a binary tree}
	\centering
	\includegraphics[height=200pt,width=300pt]{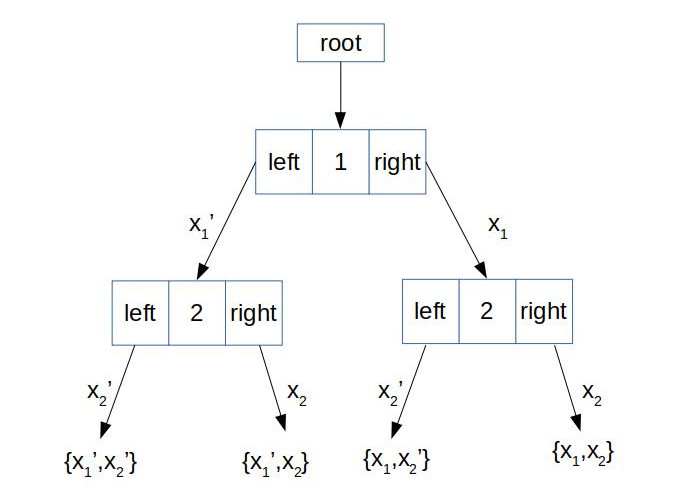}
\end{figure}

\subsection{Elimination of superset fully populated clauses}
For a given clause $C_i$ in $F$, we find edge, $E$ in the tree, where, $x_i=y_i \forall x_i \in C_i$, where,  $ y_i \in E  $, and remove that edge, by putting source pointer to NULL value. For e.g. for $F=\{\{ \neg x_1\}, \{x_1,\neg x_2 \}\}$ the set of fully populated clauses, shown in \cref{imgfpctree} shall go under transformations, shown in \cref{imgprocess}

\begin{figure}[h]
	\label{imgprocess}
	\caption{Elimination of superset fully populated clauses for $F=\{\{ \neg x_1\}, \{x_1,\neg x_2 \}\}$}
	\centering
	\includegraphics[height=250pt,width=350pt]{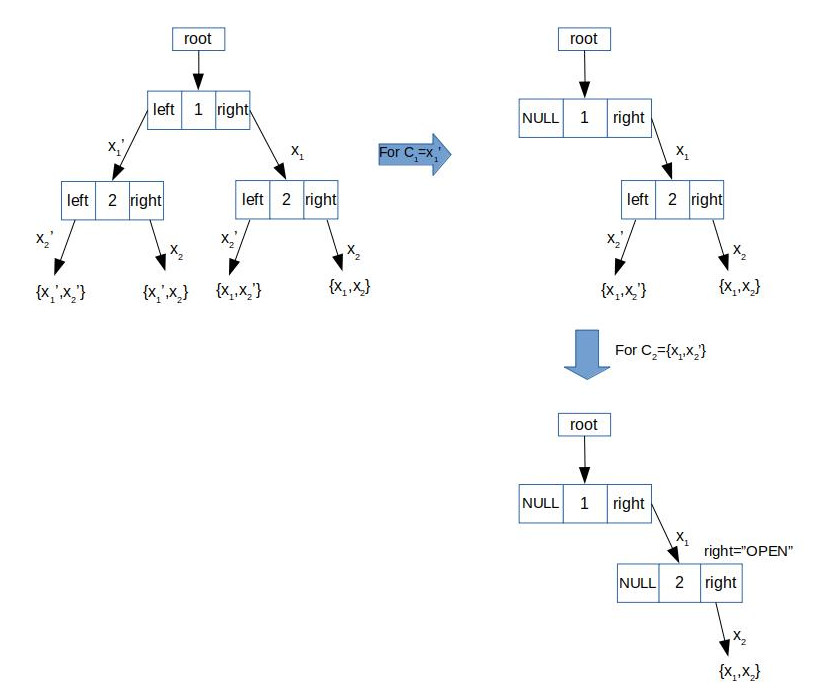}
\end{figure}

After processing all clauses in $F$, we get a set of fully populated clauses, $S$ which (and their subsets) are absent in $F$, hence, for each $S_i \in S$, $F$ is satisfiable for $S_i=0(false)$. For e.g. in \cref{imgprocess}, we are left with a fully populated clause, $S_1=\{x_1,x_2 \}$ for $F=\{\{ \neg x_1\}, \{x_1,\neg x_2 \}\}$ , which implies, $F$ is satisfiable for $S_1=\{x_1,x_2 \}=0(false)$, which is true. Further, $F$ is unsatisfiable for any other truth assignment.
\subsection{Optimisations to the algorithm}
The steps explained above are required to find the solution to any given boolean formula. However, different types of formulas are possible for a given set of variables, so different kinds of optimisations can be applied to the algorithm to solve a particular type of formula.

A clause with lower cardinality is subset of more fully populated clauses, so, the given formula can be sorted in ascending order of cardinality of clauses. It will enable the algorithm to eliminate larger number of fully populated clauses at initial stage, which shall result in effective memory management.

A null clause, $\phi$ is subset of all fully populated clauses. So, if $F$ contains $\phi$, it implies, no fully populated clause is absent in the formula along with it's all subsets. Which implies, $F$ is unsatisfiable. So, if algorithm find $\phi$ in $F$, it shall return "unsatisfiable formula" and exit.

If the algorithm generates set of fully populated clauses for one clause at a time, it will result in effective memory management. If used after sorting, it will reduce the space and time required to produce results. 
\begin{algorithm}[H]
	\label{alg:checkSAT}
	\caption{CheckSAT(F)}
	\begin{algorithmic}
		\STATE{Define struct node(left,i,right)}
		\STATE{Define const OPEN=node(NULL,-1,NULL)}
		\STATE{Define node *root}
		\STATE{Define list V,E,C}
		\STATE{Define boolean SAT=false, closedTree=true, tautologyClause=false}
		\STATE{INPUT list F}
		\STATE{SORT F}
		\STATE{SET root=OPEN}
		\FOR{each $C_i \in F$}
			\IF{$C_i=\phi$}
				\STATE{Set SAT:=false}
				\RETURN SAT
			\ENDIF
			\STATE{Set tautologyClause=false}
			\FOR{each $x_j \in C_i$}
				\IF{$\neg x_j \in C_i$}
					\STATE{Set tautologyClause=true}
					\STATE{Break loop}
				\ENDIF
				
				\IF{$x_j \notin V$}
					\STATE{INSERT $x_j$ in $V$}
					\STATE{Set closedTree:=true}
					\FOR{each pointer in TREE(root)}
						\IF{pointer=OPEN}
							\STATE{Insert node(OPEN,j,OPEN)}
							\STATE{Set closedTree:=false}
						\ENDIF
					\ENDFOR
					\IF{closedTree=true}
						\STATE{Set SAT:=false}
						\RETURN SAT
					\ENDIF
				\ENDIF
			\ENDFOR
			\FOR{each pointer in TREE(root)}
				\IF{tautologyClause=true}
					\STATE{Break loop}
				\ENDIF
				\IF{$C_i\subseteq E_{pointer}$}
					\STATE{Set pointer:=NULL}
				\ENDIF
			\ENDFOR
		\ENDFOR
		\FOR{each pointer in TREE(root)}
			\IF{pointer=OPEN}
				\PRINT{ $F$ is satisfiable for $E_{pointer}=0$}
				\STATE{Set SAT:=true}
			\ENDIF
		\ENDFOR 
		\RETURN SAT
	\end{algorithmic}
\end{algorithm}

\subsection{Illustration}
Let, 
$$F=\{\{\neg x_1,\neg x_2\}, \{x_3\},\{\neg x_1\},\{x_1,\neg x_2,\neg x_3 \}\}$$
after sorting $F$, we get,
$$F=\{ \{x_3\}, \{\neg x_1\},\{\neg x_1,\neg x_2\},\{x_1,\neg x_2,\neg x_3 \}\}$$

for $C_1=\{x_3\}$, we add node to root, to get a tree as shown in \cref{imgi1}

\begin{figure}[H]
	\label{imgi1}
	\caption{Tree after adding node for  $C_1=\{x_3\}$}
	\centering
	\includegraphics[height=100pt,width=120pt]{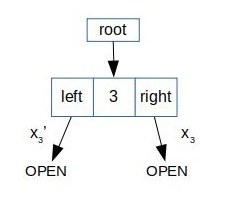}
\end{figure}

Now, we put $ptr=NULL$, where $C_1 \subseteq E_{ptr}$, for  $C_1=\{x_3\}$, we get,

\begin{figure}[H]
	\label{imgi2}
	\caption{Tree after putting $ptr=NULL$, where $C_1 \subseteq E_{ptr}$}
	\centering
	\includegraphics[height=100pt,width=100pt]{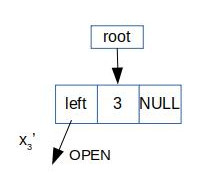}
\end{figure}

Now, for $C_2=\{\neg x_1\}$, we add node to ptr, where ptr=OPEN, to get a tree as shown in \cref{imgi3}
\begin{figure}[H]
	\label{imgi3}
	\caption{Tree after adding node for  $C_2=\{\neg x_1\}$}
	\centering
	\includegraphics[height=150pt,width=150pt]{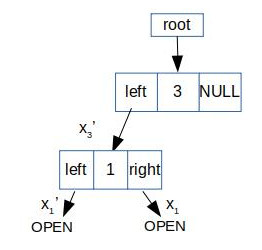}
\end{figure}

Now, we put $ptr=NULL$, where $C_2 \subseteq E_{ptr}$, for $C_2=\{\neg x_1\}$, we get,

\begin{figure}[H]
	\label{imgi4}
	\caption{Tree after putting $ptr=NULL$, where $C_2 \subseteq E_{ptr}$}
	\centering
	\includegraphics[height=150pt,width=150pt]{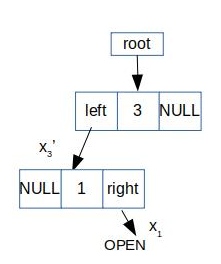}
\end{figure}

Now, for $C_3=\{\neg x_1,\neg x_2\}$, we add node to ptr, where ptr=OPEN, to get a tree as shown in \cref{imgi5}
\begin{figure}[H]
	\label{imgi5}
	\caption{Tree after adding nodes for  $C_3=\{\neg x_1,\neg x_2\}$}
	\centering
	\includegraphics[height=200pt,width=150pt]{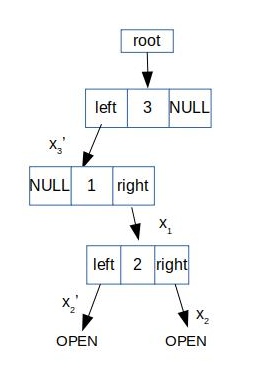}
\end{figure}

Now, we put $ptr=NULL$, where $C_3 \subseteq E_{ptr}$, for $C_3=\{\neg x_1,\neg x_2\}$, we get,

\begin{figure}[H]
	\label{imgi6}
	\caption{Tree after putting $ptr=NULL$, where $C_3 \subseteq E_{ptr}$}
	\centering
	\includegraphics[height=200pt,width=150pt]{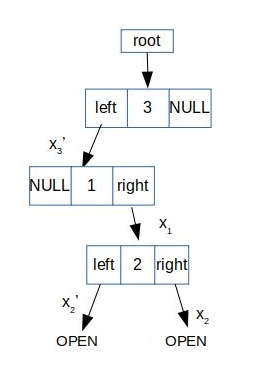}
\end{figure}
But, tree already have eliminated fully populated clauses, which are supersets of $\{\neg x_1\}$, in previous steps, so it remained unchanged.

In next step, for $C_4=\{ x_1,\neg x_2,\neg x_3\}$, $C_4$ does not contains a new node, so tree remains the same. 

Now, we put $ptr=NULL$, where $C_4 \subseteq E_{ptr}$, for $C_4=\{ x_1,\neg x_2,\neg x_3\}$, we get,

\begin{figure}[H]
	\label{imgi7}
	\caption{Tree after putting $ptr=NULL$, where $C_4 \subseteq E_{ptr}$}
	\centering
	\includegraphics[height=200pt,width=150pt]{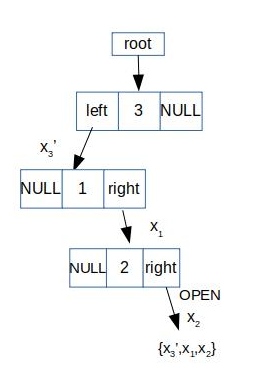}
\end{figure}

After processing all clauses, algorithm prints the solution, i.e. "$F$ is satisfiable for $\{\neg x_3,x_1,x_2\}=0$", where, $$F=\{\{\neg x_1,\neg x_2\}, \{x_3\},\{\neg x_1\},\{x_1,\neg x_2,\neg x_3 \}\}$$

\section{Implications on P vs NP Problem}
The above mentioned algorithm runs in polynomial time on it's input size, which implies,
$$SATISFIABILITY \in P$$

From Corollary 1. in \cite{Karp1972}
$$\implies P = NP$$

\section{Cardinality Function}
Cardinality function can be used to determine cardinality of a given boolean formula. It can be used to optimise existing algorithms. For a given boolean formula in CNF, we define a function $f(X,X_c)$ where, $X=(x_1,x_2,x_3,...,x_n)$ and $X_c=(\neg x_1,\neg x_2,\neg x_3,...,\neg x_n)$, by replacing disjunction with multiplication and conjuction with addition. For e.g.
Let the boolean formula, $F$, is given by,

\begin{equation}
\label{eq:p_f0}
F=(x_1 \vee x_2 \vee \neg x_3)\wedge(\neg x_1 \vee x_2 \vee x_3)
\end{equation}

We define the function, $f(X,X_c)$, given by:
\begin{equation}
f(X,X_c)=(x_1\*x_2\*\neg x_3)+(\neg x_1 \* x_2 \*x_3)
\end{equation}

In general, $f(X,X_c)$ can be defined as: 
\begin{equation}
\label{eq:p_f1}
f(X,X_c)=\sum_{i=1}^{p}M_i \; \; \forall i \in [1,p] 
\end{equation}
where, $p$ is number of clauses in $F$ and 
\begin{equation}
\label{eq:p_f2}
M_i=\prod x_j \; \; \forall x_j\in C_i
\end{equation}
where, $C_i$ is $i^{th}$ clause in the given boolean formula, $F$. It is to be noted that $f(X,X_c)$ is a function on integers, i.e. $f:\mathbf{Z} \rightarrow \mathbf{Z}$

The algorithm for $f(X,XC)$ for the boolean formula given in \cref{eq:p_f0} is given below:

\begin{algorithm}[H]
	\caption{$f(X,XC)$}
	\label{alg:p_f}
	\begin{algorithmic}
		\RETURN $(X[1]*X[2]*XC[3])+(XC[1] * X[2] *X[3] )$
	\end{algorithmic}
\end{algorithm}

\subsection{Total number of clauses}
Following algorithm can be used to check total number of clauses in $F$
\begin{algorithm}[H]
	\caption{TotalClauses()}
	\label{alg:total}
	\begin{algorithmic}
		\STATE{Set X:=1, XC=1}
		\RETURN f(X,XC)
	\end{algorithmic}
\end{algorithm}

\subsection{Total number of clauses containing $x_i$}
Following algorithm can be used to check total number of clauses, containing $x_i$ in $F$
\begin{algorithm}[H]
	\caption{n($i$)}
	\label{alg:nxi}
	\begin{algorithmic}
		\STATE{Set T:=$TotalClauses()$}
		\STATE{Set X[i]=0}
		\RETURN T-f(X,XC)
	\end{algorithmic}
\end{algorithm}

\subsection{Total number of clauses containing $\neg x_i$}
Following algorithm can be used to check total number of clauses, containing $\neg x_i$ in $F$
\begin{algorithm}[H]
	\caption{nc($i$)}
	\label{alg:ncxci}
	\begin{algorithmic}
		\STATE{Set T:=$TotalClauses()$}
		\STATE{Set XC[i]=0}
		\RETURN T-f(X,XC)
	\end{algorithmic}
\end{algorithm}

\subsection{Total number of clauses containing $x_i$ or $\neg x_i$}
Following algorithm can be used to check total number of clauses, containing $x_i$ or $\neg x_i$ in $F$
\begin{algorithm}[H]
	\caption{nxUxi($i$)}
	\label{alg:nii}
	\begin{algorithmic}
		\STATE{Set T:=$TotalClauses()$}
		\STATE{Set X[i]:=0, XC[i]=0}
		\RETURN T-f(X,XC)
	\end{algorithmic}
\end{algorithm}

\subsection{Checking tautology clauses in $F$}
The following algorithm can be used to check existence of tautology clauses, in a given formula, $F$, in polynomial time.

\begin{algorithm}[H]
	\label{alg:checkTaut}
	\caption{CheckTautologyClauses()}
	\label{alg:chk_taut}
	\begin{algorithmic}
		\STATE{Define i:=0}
		\STATE{Set i:=1} 
		\WHILE{$i\le n$}
			\IF{$n(i)+nc(i)> nxUxi(i)$} 
				\RETURN $true$
			\ENDIF
			\STATE{Update $i:=i+1$}
		\ENDWHILE
		\RETURN $false$
	\end{algorithmic}
\end{algorithm}

\section{Optimisations based on cardinality}
These optimisations can also be integrated with algorithm given in \cref{alg:checkSAT}.

Suppose, $F_n$ is a complete boolean formula over a variable set, $V$, and $C_k$ is a fully populated clause over $V$, and $F$ is a boolean formula, such that,
$$F=F_n \setminus \mathcal{P}(C_k)$$

From \cref{thm:SatF}, $\implies$ $F$  is satisfiable. 

And, $$|F|= |F_n|-|\mathcal{P}(C_k)|$$ 

We know that cardinality of a power set of $C_k$ is $2^n$, where $|C_k|=n$,
i.e.
\begin{equation}
\label{eq:opt:2n}
 |\mathcal{P}(C_k)|=2^n
\end{equation}

from \cref{thm:3n}, $|F_n|=3^n$ 
$$\implies |F|=3^n-2^n$$

and from \cref{cor:3n-1} and \cref{thm:2n-1} $$\forall x_i \in C_k \implies n(x_i)=3^{n-1}-2^{n-1}$$
and 
$$\forall \neg x_i \in C_k \implies n(\neg x_i)=3^{n-1}-2^{n-1}$$
where, $n( x_i)$ is number of clauses containing $ x_i$ in $F$, and
 $n(\neg x_i)$ is number of clauses containing $\neg x_i$ in $F$.

Now, Suppose $D$ is any non-tautology clause over variable set $V$ such that, $D \notin F$,
$$\implies D \notin (F_n \setminus \mathcal{P}(C_k))$$ 
But, $F_n$ is a complete boolean formula over $V$,
$$\implies D \in \mathcal{P}(C_k)$$
Now, we define a formula $F_{new}$, such that,
$$F_{new}=F\wedge D$$
It implies, there does not exists a fully populated clause, $C_k$, such that,
$$\forall D \in \mathcal{P}(C_k) \implies D \notin F_{new}$$
From \cref{thm:satiff},
$\implies$ $F_{new}$ is unsatisfiable.

It implies, inclusion of any non-tautology clause in $F$, results in unsatisfiable formula.

So, for a boolean formula $F$, in effective CNF, which can be checked using algorithm given in \cref{alg:checkTaut}, over a variable set $V$, of $n$ variables, we have, 
\begin{itemize}
	\item If $|F|>3^n-2^n$ then $F$ is unsatisfiable.
	\item If $\exists x_i \in V \; | \; min(n(x_i),n(\neg x_i))>3^{n-1}-2^{n-1}$ then $F$ is unsatisfiable.
	\item If $\exists x_i \in V \; | \; n(x_i)\le 3^{n-1}-2^{n-1}<n(\neg x_i)$ then $x_i=0$, $\neg x_i=1$ belong to the solution, if any.
	\item If $\exists x_i \in V \; | \; n(\neg x_i)\le 3^{n-1}-2^{n-1}<n( x_i)$ then $\neg x_i=0$, $x_i=1$ belong to the solution, if any.
\end{itemize}

Suppose, the given boolean formula, $F_{gen}$,  is not in effective CNF, which can be checked using algorithm given in \cref{alg:checkTaut}, then, from \cref{sec:cCNF}, we know that maximum possible cardinality of a general boolean formula in CNF, is $2^{2n}$, including a null clause, $\phi$. Which implies,
\begin{itemize}
		\item If $|F_{gen}|>2^{2n}-2^n$ then $F_{gen}$ is unsatisfiable.
\end{itemize}

\section{Conclusion}
\label{sec:conclusion}
A necessary and sufficient condition has been established to determine satisfiability of a boolean formula in CNF, which has been used to develop a polynomial time algorithm to solve boolean satisfiability problem. Existence of a polynomial time algorithm to solve boolean satisfiability problem implies $P=NP$.

\bibliographystyle{siamplain}
\bibliography{pvsnparxiv}
\end{document}